\newtheorem{theorem}{Theorem}[section]
\newtheorem{lem}{Lemma}[section]
\newtheorem{rem}{Remark}[section]
\newcounter{hypA}
\newenvironment{hypA}{\refstepcounter{hypA}\begin{itemize}
  \item[({\bf A\arabic{hypA}})]}{\end{itemize}}
\newcounter{hypB}
\date{}
\begin{document}

%+Title
\begin{center}

{\Large \textbf{An Adaptive Sequential Monte Carlo Algorithm for Computing Permanents}}

\vspace{0.5cm}

BY AJAY JASRA \& JUNSHAN WANG

{\footnotesize Department of Statistics \& Applied Probability,
National University of Singapore, Singapore, 117546, SG.}\\
{\footnotesize E-Mail:\,}\texttt{\emph{\footnotesize staja@nus.edu.sg}}, \texttt{\emph{\footnotesize a0082738@nus.edu.sg}}
\end{center}

\begin{abstract}
We consider the computation of the permanent of a binary $n\times n$ matrix. It is well-known that the exact computation is a \#P complete problem.
A variety of Markov chain Monte Carlo (MCMC) computational algorithms have been introduced in the literature whose cost, in order to achieve a given level of accuracy, is $\mathcal{O}(n^7\log^4(n))$; see \cite{bezakova,jerrum}.
These algorithms use a particular collection of probability distributions, the `ideal' of which, (in some sense) are not known and need to be approximated. In this paper we propose an
adaptive sequential Monte Carlo (SMC) algorithm that can both estimate the permanent and the ideal sequence of probabilities on the fly, with little user input.
We provide theoretical results associated to the SMC estimate of the permanent, establishing its convergence and analyzing the relative variance of the estimate,  in particular computating explicit bounds on the relative variance which depend upon $n$. Using this latter result, we provide
a \emph{lower-bound} on the computational cost, in order to achieve an arbitrarily small relative variance; we find that this cost is $\mathcal{O}(n^4\log^4(n))$. Some numerical simulations are also given.
\\
\textbf{Key Words}: Sequential Monte Carlo, Permanents, Relative Variance
\end{abstract}

\section{Introduction}

Consider a $n\times n$ binary matrix $A=(a_{ij})$, the permanent is defined as
$$
\textrm{per}(A) = \sum_{\sigma\in S_n} \prod_{i=1}^n a_{i\sigma(i)}
$$
where $S_n$ is the set of permutations of $[n]:=\{1,\dots,n\}$.
Computing the permanent occurs in a wide variety of real contexts, including applications in physics \cite{kast}.
The exact computation of the permanent is not possible in polynomial time (as a function of $n$) and there
are a wide variety of ground-breaking randomized algorithms \cite{bezakova,jerrum} which can find approximate solutions in polynomial time;
the fastest of which is $\mathcal{O}(n^7\log^4(n))$ in \cite{bezakova}. These algorithms use MCMC (a simulated annealing algorithm); see also the recent work of \cite{miller} for an SMC algorithm.

The calculation of the permanent can be rephrased in terms of counting the perfect matchings of a bipartite garph.
Consider a bipartite graph $G=(U,V,E)$, where $U=\{u_1,\dots,u_n\}$ and $V=\{v_1,\dots,v_n\}$ are disjoint sets,
and $E$ is the edge set which is associated to the matrix $A$; for 
$(i,j)\in[n]^2$, 
$(u_i,v_j)\in U\times V$, $(u_i,v_j)\in E$ if and only if $a_{ij}=1$.
Recall that a perfect matching of $G$ is a set of edges with cardinality $n$, such that no two edges contain the same 
vertex. If
$$
\mathcal{M} := \{((u_{k_1},v_{s_1}),\dots,(u_{k_n},v_{s_n}))\in E^n: (k_i,s_i)\in [n]^2, k_1\neq k_2\neq\cdots\neq k_n, s_1\neq s_2\neq\cdots\neq s_n\}
$$
denotes the set of perfect matchings, then from the definitions $\textrm{per}(A)=\textrm{Card}(\mathcal{M})$.
The collection of near perfect matchings, is a perfect matching with a single edge removed; we denote this by
$\mathcal{N}(u,v)$, where $(u,v)$ are the pair of vertices that do not lie in the set. That is, for a $M\in\mathcal{M}$, such that $(u,v)\in M$
$$
M\setminus \{(u,v)\}\in \mathcal{N}(u,v)
$$
The work in \cite{beza,jerrum} focusses on firstly a Metropolis-Hastings (M-H) algorithm which is defined on the space 
(note that the graph is completed, which we discuss later on)
$\mathsf{M} = \mathcal{M}\cup\Big(\bigcup_{(u,v)\in U\times V} \mathcal{N}(u,v)\Big)$.
In particular, efficiency results are proved about the spectral gap associated to the given M-H kernel for a particular collection of probabilities defined on $\mathsf{M}$. Simulation from these probabilities
allow one to approximate the permanent. In particular, the idea is to construct a sequence of probabilities on $\mathsf{M}$, which are increasingly more complex and of the form:
$$
\eta_p(M) \propto \Phi_p(M) \quad M\in\mathsf{M}, 0\leq p \leq r
$$
with $\Phi_p:\mathsf{M}\rightarrow\mathbb{R}^+$; these are defined later on.
Writing $Z_p=\sum_{M\in\mathsf{M}}\Phi_p(M)$, \cite{jerrum} show that
$$
\textrm{per}(A) \approx \frac{Z_r}{n^2+1}
$$
and use the standard decomposition:
$$
Z_r = Z_0 \prod_{k=1}^r \frac{Z_k}{Z_{k-1}}
$$
to facilitate an accurate estimation of $Z_r$ and hence to estimate the permanent; note $Z_0$ is known. The idea is that it is `easy' to estimate $Z_1$ and so if the discrepancy between the consective $Z$'s is small, the resulting estimate is better than if one just estimate $Z_r$ from the beginning.
 In order that the estimate of the permanent can be made arbitrarily accurate, $r$ is a function of $n$,
and most recently \cite{bezakova} give a procedure which costs $\mathcal{O}(n^7\log^4(n))$. The results rely upon a particular property of `ideal'  (say) $\{\Phi_p^*\}_{0\leq p \leq r}$, which cannot be computed in practice.

SMC methods are amongst the most widely used computational techniques in statistics, engineering, physics, finance and many other disciplines;
see \cite{doucet} for a recent overview. They are designed to approximate a sequence of probability distributions of increasing dimension. The method uses $N\geq 1$
samples (or particles) that are generated in parallel, using importance sampling  and resampling methods. The approach can provide estimates of expectations with respect to this sequence of distributions using the $N$ weighted particles, of increasing accuracy as $N$ grows. 
These methods can also be used to approximate a sequence of probabilities on a common space, along with the ratio of normalizing constants; see \cite{ddj}, which is precisely the problem of interest. They have been found to out-perform MCMC in some situations.

In this article, we propose an adaptive SMC algorithm which will not only approximate the ratio of normalizing constants, but estimate the $\{\Phi_p^*\}_{0\leq p \leq r}$ on the fly.  This algorithm benefits from the population-based nature of the simulations, which can out-perform single chain methods (see \cite{jasra}).
The consistency of this method is also established (that is as $N$ grows); we show that our estimate of the permanent converges in probability
to the true value. The analysis of adaptive SMC algorithms is non-trivial and the literature not very developed (see \cite{beskos} and the references therein), so no rate of convergence is provided.
In addition, we consider the relative variance of the SMC estimate of the permanent, and its dependence upon $n$.  Due to the afore mentioned issues with the analysis of adaptive SMC algorithms, we consider a non-adaptive `perfect' algorithm and the associated relative variance associated to this algorithm.
Using the results in \cite{beza,jerrum,schweizer} we show that in order to control the relative variance up-to arbitrary precision one requires a computational effort of $\mathcal{O}(n^2\log^2(n))$; the adaptive SMC algorithm requires an additional cost which increases this to $\mathcal{O}(n^4\log^4(n))$.
As this analysis is for a simplified version of the new algorithm the cost of $\mathcal{O}(n^4\log^4(n))$ is expected to be a \emph{lower-bound} on the computational effort to control the relative variance. This cost is, however, very favorable in comparison to the existing work and suggests that the SMC
procedure is a useful contribution to the literature on approximating permanents.

This article is structured as follows. In Section \ref{sec:comp}, we discuss the existing computational algorithms, along with our new adaptive SMC algorithm and a result on its consistency;
we discuss why it is non-trivial to obtain a rate of convergence. In Section \ref{sec:complex}, our complexity analysis is given.
In Section \ref{sec:numerics} some numerical simulations are provided, which detail some of our points made in the previous sections. In Section \ref{sec:summary} the article is concluded, with some discussion of future work. The appendix holds some technical results associated to the consistency analysis in Section \ref{sec:comp}.

\section{Computational Algorithms}\label{sec:comp}

\subsection{Simulated Annealing Algorithm}

The simulated annealing algorithms of \cite{bezakova,jerrum} work in the following way. The authors define a sequence of activities $(\phi_p:U\times V\rightarrow\mathbb{R}_+)_{0\leq p \leq r}$,
such that $\phi_0(u,v)=1$ $\forall (u,v)\in U\times V$ and $\phi_r(u,v)=1$, if $(u,v)\in E$ and $\phi_r(u,v)=1/n!$ otherwise. 
%To avoid unnecessary technical complications  we assume that $\forall (u,v)\in U\times V$, $\mathcal{N}(u,v)\neq \emptyset$. All the algorithms and proofs to follow can be modified if this assumption is not made,
%with some tedious notational changes.

The idea is to define a sequence of target distributions on
the set of perfect and near-perfect matchings associated to a \emph{completion} of the original graph (a complete graph is one for which every vertex is connected to every other).
The initial graph is such that all perfect and near perfect matchings have close to uniform probability and as the sequence gets closer to $r$, so the graph becomes closer to the original graph
and the complexity of the target much higher (so for example, it may be difficult to define a Markov transistion that easily moves around on the given sample space).

The targets are defined on the common space $\mathsf{M} = \mathcal{M}\cup\Big(\bigcup_{(u,v)\in U\times V} \mathcal{N}(u,v)\Big)$, $p\in\{0,\dots,r\}$
\begin{equation}
\eta_p(M) \propto \Phi_p(M)\label{eq:targets}
\end{equation}
where
$$
\Phi_p(M) = \left\{\begin{array}{ll} \phi_p(M)w_p(u,v) & \textrm{if}~M\in\mathcal{N}(u,v)~\textrm{for some} (u,v)\in U\times V\\
\phi_p(M) & \textrm{if}~M\in\mathcal{M}
\end{array}\right.
$$
where $\phi_p(M) = \prod_{(u,v)\in M}\phi_p(u,v)$ and $w_k:U\times V\rightarrow\mathbb{R}$ is a weight to be defined. 

\cite{jerrum} note that ideally, one should choose $w_p=w^*_p$, where:
\begin{equation}
w_p^*(u,v) = \frac{\Xi_p(\mathcal{M})}{\Xi_p(\mathcal{N}(u,v))}\quad \mathcal{N}(u,v) \neq \emptyset\label{eq:w_star}
\end{equation}
and $\Xi_p(\mathcal{C}):=\sum_{M\in \mathcal{C}} \phi_p(M)$, $\mathcal{C}\subseteq\mathsf{M}$. This means that $\sum_{M\in\mathcal{M}}\eta_p(M) \geq 1/(n^2+1)$.
The definition of $\phi_p(u,v)$ is given in either \cite{bezakova,jerrum} and we refer the reader there for good choices of this cooling sequence.
Note that $r$ is $\mathcal{O}(n^2\log(n))$ in \cite{jerrum} and this improved to $\mathcal{O}(n\log^2(n))$ in \cite{beza}.

The simulated annealing algorithm is then essentially to generate a sequence of Markov chains each with invariant measure
$\eta_k$, although some additional improvements are in \cite{jerrum}. Appropriate M-H kernels $(K_p)_{1\leq p \leq r}$ of invariant measure $\{\eta_p\}_{1\leq p \leq r}$ can be found in \cite{jerrum}.
In order to estimate the permanent, \cite{jerrum} state that:
$$
\textrm{per}(A) \approx \frac{Z_r}{n^2+1}
$$
and use the standard decomposition:
$$
Z_r = Z_0 \prod_{k=1}^r \frac{Z_k}{Z_{k-1}}
$$
to estimate the ratio of normalizing constants, to estimate the permanent (note $Z_0 = n!(n^2+1)$).
In the analysis in \cite{beza,jerrum}, particular emphasis is placed upon being able to estimate the weights $w_p(u,v)$
to within a factor of 2 of the ideal ones $w_p^*(u,v)$.

\subsection{New Adaptive SMC Algorithm}\label{sec:smc}

One of the major points of the simulated annealing algorithm, is that the methodology is not really designed to adaptively compute approximations of \eqref{eq:w_star} in an elegant manner
and use a single Markov chain for simulation (or multiple non-interacting chains). This technique can often be out-performed by methods which generate a population of interacting samples in parallel (see \cite{jasra}); a method which is designed for this is in \cite{ddj}. This approach will sample/approximate a sequence of related
 probabilities that are defined upon a common space. This is achieved by using a combination of importance sampling, MCMC and resampling, with each step being perfomed sequentially in time. $N>1$ samples are generated in parallel and weights $\omega_p^i$, $i\in[N]$ are used to approximate
the probabilities. In this context, one would like to sample from the sequence in \eqref{eq:targets}, when $\Phi_p$ uses the ideal weights. This is not possible in general, and so we will use the collection samples generated at the previous time point, to approximate the ideal weights.

The algorithm is now described, which is just an adaptive version of the class algorithms found in \cite{ddj}.
We fix a small $\delta>0$ (say $\delta\approx 10^{-10}$) which is used below to avoid dividing by zero. Below the Markov kernels $K_p$ (with invariant meausre $\eta_p$) are as described in \cite{jerrum}.
\begin{enumerate}
\item{Sample $M_0^1,\dots,M_0^N$ i.i.d.~from $\eta_0$. Set $p=0$ and $\omega_p^i = 1$ for each $i\in [N]$.}
\item{If $p=r$ stop, otherwise, For each $(u,v)\in U\times V$, compute
$$
w_{p+1}^N(u,v) = \frac{\sum_{i=1}^N \omega_p^i\mathbb{I}_{\mathcal{M}}(M_p^i)[\prod_{(u',v')\in M_p^i}\phi_{p+1}(u',v')/\phi_{p}(u',v')] + \delta}{\{\frac{1}{w_p^N(u,v)}\sum_{i=1}^N \omega_p^i\mathbb{I}_{\mathcal{N}(u,v)}(M_p^i)[\prod_{(u',v')\in M_p^i}\phi_{p+1}(u',v')/\phi_{p}(u',v')]+\delta\}}
$$
set $p=p+1$.}
\item{Compute for $i\in[N]$:
\begin{eqnarray*}
\omega_p^i & = & \omega_{p-1}^i G_{p-1,N}(M_{p-1}^i)\\
G_{p-1,N}(M_{p-1}^i) & = & \frac{\Phi_p^N(M_{p-1}^i)}{\Phi_{p-1}^N(M_{p-1}^i)}
\end{eqnarray*}
where 
$$
\Phi_p^N(M) = \left\{\begin{array}{ll} \phi_p(M)w_p^N(u,v) & \textrm{if}~M\in\mathcal{N}(u,v)~\textrm{for some}~(u,v)\in U\times V\\
\phi_p(M) & \textrm{if}~M\in\mathcal{M}.
\end{array}\right.
$$
Compute $\textrm{ESS} = (\sum_{i=1}^N \omega_p^i)^2/\sum_{i=1}^N (\omega_p^i)^2$ if $\textrm{ESS}<T$ resample and set $\omega_p^i=1$ (denoting the resampled particles with the same notation), otherwise go to 4.}
\item{For $i\in[N]$ sample $M_p^i|M_{p-1}^i \sim K_p(M_{p-1}^i,\cdot)$ and go to 2.}
\end{enumerate}

Step 2.~requires an $\mathcal{O}(n^2)$ operation, that is, to approximate the $w_{p+1}^*(u,v)$ for each $(u,v)$. For large $N$, $w_{p+1}^N(u,v)$ should be close to $w_{p+1}^*(u,v)$;
this is proved formally below (see the proofs in the appendix). We note however, that no rates of convergence are obtained, which removes the possibility of consideration of calibrating $N$ to ensure that one has $w_{p}^N(u,v)$ within a factor of 2 of the ideal weights; we discuss this issue below.

Step 3.~is called resampling; see \cite{doucet} for some overview of this approach. The resampling is performed dynamically, that is, when the ESS drops below a threshold $T$; the ESS measures the number of useful samples and is a number between
$1$ and $N$. Typically, one sets $T=N/2$ which is what is done in this article.

The estimate of the permanent is
\begin{equation}
n! \prod_{p=1}^{l} \frac{1}{N} \sum_{i=1}^N \omega_{s_p}^i\label{eq:est1}
\end{equation}
where one assumes that resampling occurs $s$ times, at time-points $s_1<\cdots< s_l$. See \cite{ddj} and the references therein, for a discussion of these estimates, along with the convergence.
We will discuss the convergence below, in the situation where $T=1$ (i.e.~one resamples at every time point). If we denote by $\mathcal{M}_{\mathcal{G}}$ as the perfect matchings for the original graph then an alternative estimate of the permanent is
\begin{equation}
n!(n^2+1) \Big(\prod_{p=1}^{l} \frac{1}{N} \sum_{i=1}^N \omega_{s_p}^i\Big)
\sum_{i=1}^N\mathbb{I}_{\mathcal{M}_{\mathcal{G}}}(M_r^i)\frac{\omega_r^i}{\sum_{j=1}^N\omega_r^j}.\label{eq:est2}
\end{equation}
We remark that all of the subsequent analysis can be adopted for this estimate and the conclusions do not change (so our analysis is for the estimate \eqref{eq:est1}). One might expect for $n$ moderate that this estimate could
be marginally better. However, if the original graph has very few perfect matchings, then the
number that are sampled are low and this estimate may perform more poorly than our first estimate. We perform an empirical comparison in Section \ref{sec:numerics}.

%Before this algorithm is implemented, we will consider the complexity in $n$ associated to estimating the permanent up-to arbitrary precision.

The algorithm as presented, may have a number of advantages over simulated annealing.
Firstly, as noted above, is the population-based nature of the evolution of the samples; they interact with each other, which can improve performance against single-chain approaches such as simulated annealing (see e.g.~\cite{jasra}). 
Secondly, as noted above, the approach of estimating
the ideal $w_p$ is naturally incorporated into the sampling mechanism. One disadvantage, against simulated annealing,
however, is the need to store $N$ samples in $\mathsf{M}$.

\subsection{Convergence Analysis}

Below we will use $\rightarrow_{\mathbb{P}}$ to denote convergence in probability as $N$ grows. We will analyze the algorithm when one resamples multinomially at every time step ($T=1$); this is an assumption typically made in the literature - see \cite{delmoral}.
We do not need to specify the scheme associated to the change of $\phi_p$ and this can be either that in \cite{jerrum} or \cite{beza}.
For reasons that will be clear later on in the article, we use $\gamma_r^N(1)$ to denote
the estimate of $\textrm{per}(A)/n!=\gamma_r(1)$ (see Section \ref{sec:complex}). We have the following result, whose proof is in the appendix.

\begin{theorem}\label{theo:consist}
For any $n>1$ fixed, we have
$$
\gamma_r^N(1) \rightarrow_{\mathbb{P}} \gamma_r(1).
$$
\end{theorem}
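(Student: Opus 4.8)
The plan is to establish convergence of the SMC permanent estimate by first controlling the adaptive weights $w_p^N(u,v)$, then treating the algorithm as a small perturbation of an idealized non-adaptive SMC with known weights $w_p^*(u,v)$, for which standard convergence results apply. The key difficulty is that the weights $w_p^N(u,v)$ used at stage $p$ depend on the entire particle history, so the potential functions $G_{p-1,N}$ are themselves random and not of the fixed-potential form covered by the classical theory in \cite{ddj,delmoral}. The strategy is therefore inductive in $p$: I would show that at each stage the empirical measures and the estimated weights converge, and that these convergences propagate forward.

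\textbf{Step 1 (Base case and weight convergence).} Since $M_0^1,\dots,M_0^N$ are i.i.d.\ from $\eta_0$, the empirical measure $\frac{1}{N}\sum_i \delta_{M_0^i}$ converges to $\eta_0$ by the law of large numbers, and because $n$ is fixed the state space $\mathsf{M}$ is finite. I would then argue by induction: suppose that after stage $p-1$ the weighted empirical measure converges in probability to $\eta_{p-1}$ and that $w_q^N(u,v)\rightarrow_{\mathbb{P}}w_q^*(u,v)$ for all $q\le p-1$ and all $(u,v)$. Examining the formula in Step~2, both numerator and denominator are (up to the $\delta$ regularization and the factor $1/w_p^N(u,v)$) self-normalized averages of bounded functions of $M_p^i$ against the weights $\omega_p^i$. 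Applying the inductive hypothesis together with the continuous mapping theorem to the ratio yields $w_{p+1}^N(u,v)\rightarrow_{\mathbb{P}}w_{p+1}^*(u,v)$; here one uses that the true weights $w_p^*(u,v)$ are bounded away from zero so that dividing is safe and the fixed $\delta$ is asymptotically negligible.

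\textbf{Step 2 (Propagation through one SMC step).} Given convergence of the weights, the random potentials $G_{p-1,N}$ converge in probability, uniformly over the finite space $\mathsf{M}$, to the deterministic ideal potentials $G_{p-1}(M)=\Phi_p(M)/\Phi_{p-1}(M)$ built from the $w^*$'s. Since resampling is multinomial at every step ($T=1$) and $K_p$ is a fixed Markov kernel on a finite space, I would show that replacing $G_{p-1,N}$ by $G_{p-1}$ perturbs the updated empirical measure by a term vanishing in probability: one conditions on the current particle system, uses the LLN for multinomial resampling and for sampling from $K_p$, and absorbs the weight-estimation error via Step~1. This closes the induction, giving $\frac{1}{N}\sum_i \omega_p^i \rightarrow_{\mathbb{P}} Z_p/Z_{p-1}$ (the idealized ratio of normalizing constants) for each $p$.

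\textbf{Step 3 (Assembling the estimate).} Finally, since $\gamma_r^N(1)$ is, up to the known factor $n!$, the product $\prod_{p=1}^{r}\frac{1}{N}\sum_i\omega_{p}^i$ over the $r=\mathcal{O}(n^2\log n)$ stages (with $T=1$ resampling forces $l=r$), and $r$ is a fixed finite number once $n$ is fixed, convergence in probability of each factor and Slutsky's theorem give convergence in probability of the product to $\prod_{p=1}^{r} Z_p/Z_{p-1}=Z_r/Z_0$, which matches $\gamma_r(1)=\mathrm{per}(A)/n!$. \textbf{The main obstacle} I anticipate is Step~2: rigorously handling the coupling between the history-dependent weights and the resampling/mutation so that the accumulated weight-estimation errors do not compound across the $r$ stages. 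I expect one wants a telescoping or recursive bound showing each stage contributes an $o_{\mathbb{P}}(1)$ error, which is exactly where the appendix's technical lemmas on the adaptive weights should be invoked.
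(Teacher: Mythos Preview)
Your proposal is essentially the same inductive strategy the paper uses: Theorem~A.2 in the appendix establishes, by strong induction on $p$, that $\eta_p^N(\varphi)\rightarrow_{\mathbb{P}}\eta_p(\varphi)$ and $w_{p+1}^N(u,v)\rightarrow_{\mathbb{P}}w_{p+1}^*(u,v)$, and the main proof then writes
\[
\gamma_r^N(1)-\gamma_r(1)=\Bigl[\textstyle\prod_p\eta_p^N(G_{p,N})-\prod_p\eta_p^N(G_p)\Bigr]+\Bigl[\textstyle\prod_p\eta_p^N(G_p)-\prod_p\eta_p(G_p)\Bigr],
\]
handling the first bracket by exactly the telescoping sum you anticipate in your final paragraph, and the second by Theorem~A.2 applied to the bounded functions $G_p$. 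Your Step~3 via Slutsky is equivalent.

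There is one genuine omission in your Step~2. You write that ``$K_p$ is a fixed Markov kernel on a finite space'', but in the adaptive algorithm the Metropolis--Hastings kernel has acceptance ratio $1\wedge\bigl(\Phi_p^N(M')/\Phi_p^N(M)\bigr)$, which depends on the estimated weights $w_p^N$. So the mutation kernel is itself random, call it $K_{p,N}$, and your induction must also propagate $K_{p,N}(\varphi)(M)\rightarrow_{\mathbb{P}}K_p(\varphi)(M)$. The paper does this explicitly: in the analysis of the term $\eta_{p-1}^N\bigl[(Q_{p,N}-Q_p)(\varphi_p)\bigr]$ it splits $Q_{p,N}-Q_p$ into a piece with $G_{p-1,N}-G_{p-1}$ (which you cover) and a piece with $K_{p,N}-K_p$, and then argues that for each fixed pair $(M,M')$ the acceptance probability $1\wedge\bigl(\Phi_p^N(M')/\Phi_p^N(M)\bigr)$ is a continuous function of the $w_p^N$'s, hence converges by the induction hypothesis and the continuous mapping theorem; finiteness of $\mathsf{M}$ then gives the uniform statement. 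This is not hard once noticed, but without it your Step~2 is incomplete: the ``perturbation'' you refer to is not only in the potentials $G_{p-1,N}$ but also in the mutation step.
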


The result establishes the consistency of our approach, which is a non-trivial
convergence result, in that it is not a simple extension of the convergence results that are
currently in the literature. However, it does not establish any rate of convergence; it should be straightforward to obtain these, through non-asymptotic
$\mathbb{L}_s-$bounds (although there are not any in the literature, which apply to our algorithm), but it is non-trivial task to ensure that these bounds are sharp in $n$. However, these
type of results are important. 
For example, one is interested in being able to guarantee, with high probability that the empirical weights $w_p^N$ are close to the ideal weights.
In this direction, one would want to establish a Hoeffding type inequality; that is, at least for any $\epsilon>0$, $(u,v)\in U\times V$
$$
\mathbb{P}\bigg(\Big|w_p^N(u,v)-w_p(u,v)\Big|\geq \epsilon\bigg) \leq C_1(n,N,\epsilon)\exp\{C_2(n,N,\epsilon)\}
$$
for some constants $C_1,C_2$ that depend upon $n,\epsilon, N$ and $C_2$ goes to $-\infty$ as $N$ grows, and $C_1$ grows more slowly than $\exp\{C_2(n,N,\epsilon)\}$ decreases.
 For non-adaptive algorithms some similar results have been established in \cite[Chapter 7]{delmoral},  but not directly about quantities such as $w_p^N(u,v)$, with constants that are explicit in $n$.
Even if one converts such results for $w_p^N(u,v)$,  providing sharp bounds in $n$  is expected to be fairly challenging. One
would want to replace the Dobrushin coefficient analysis in \cite{delmoral} with one related to spectral properties
of the associated Markov chain semi-groups, which are those exploited in the next Section; then an extension to the adaptive case is required.
This programme is particularly important, but left as a topic for future work.

\section{Complexity Analysis}\label{sec:complex}

\subsection{Notation and Assumptions}

We now prove our complexity result. We will consider a `perfect algorithm' that does not use the adaptation in Section \ref{sec:smc} (Step 2). 
The difficulty in the analysis when the algorithm is adaptive is as follows. When the algorithm is non-adaptive, the estimate $\gamma_r^N(1)$ is unbiased;
and it is this property which leads to a sharp analysis of its relative variance (e.g.~\cite{schweizer}). In the adaptive case, this property does not always hold
which significantly complicates the analysis; thus we focus on a non-adaptive version of the algorithm.
As the adaptive algorithm requires estimation of the targets (so a likely increase in variance in estimation), our results will lead to a lower-bound on the complexity associated to controlling the relative variance of the estimate of the permanent.

Our proofs use Feynman-Kac notations, which we give here.
We set $\{G_p\}_{0\leq p\leq r-1}$ as the incremental weights:
$$
G_p(M) = \frac{\Phi_{p+1}(M)}{\Phi_{p}(M)}.
$$
We define the Markov kernels $\{K_p\}_{1\leq p\leq r}$ as the reversible MCMC kernels in \cite{jerrum}. One can show that 
$$
\eta_t(M) = \frac{\gamma_t(M)}{\gamma_t(1)} \quad 1\leq t\leq r
$$
where, for $\varphi\in\mathcal{B}_b(\mathsf{M})$ (the collection of real-valued and bounded-measurable functions on $\mathsf{M}$)
\begin{equation}
\gamma_t(\varphi) = \mathbb{E}\Big[\prod_{p=0}^{t-1} G_p(M_p)\varphi(M_t)\Big]\label{eq:gamma_def}
\end{equation}
where the expectation is w.r.t.~a non-homogeneous Markov chain with initial measure $\eta_0$ and transitions $\{K_p\}_{1\leq p\leq r}$. 
Note that one can also show that $\gamma_p(1) = Z_p/Z_0$.
We introduce the following non-negative operator:
$$
Q_p(M,M') = G_{p-1}(M) K_p(M,M').
$$
We also use the semi-group notation, for $0\leq p <t$
$$
Q_{p,t}(M_p,M_t)  = \sum_{(M_{p+1},\dots,M_{t-1})\in\mathsf{M}^{t-p-1}} Q_{p+1}(M_p,M_{p+1}) \times \cdots \times Q_{n}(M_{t-1},M_{t}).
$$
Finally the notation
$$
\lambda_p = \eta_p(G_p) = \sum_{M\in\mathsf{M}}\eta_p(M) G_p(M) \quad 0\leq p \leq n
$$
will prove to be useful.

Our analysis will be associated to an SMC algorithm that resamples (multinomially) at each time point. We will consider the variance of the estimate
$$
\gamma_r^N(1) = \prod_{p=0}^{r-1} \frac{1}{N} \sum_{i=1}^N G_p(M_p^i)
$$
which will approximate $\textrm{per}(A)/n!$; the factor $1/n!$ does not affect the complexity result in Theorem \ref{theo:main}. We will make the following assumption:

\begin{hypA}\label{assump:1}
We have that for each $(u,v)\in U\times V$, $\{w_p(u,v)\}_{0\leq p \leq r}$ are deterministic and for $0\leq p \leq r$
$$
\frac{1}{2}w_p^*(u,v) \leq w_p(u,v) \leq 2w_p^*(u,v)
$$
for each $(u,v)\in U\times V$.
\end{hypA}
The assumption means that one does not perform step 2.~in Section \ref{sec:smc}, but is consistent with the assumptions made in \cite{beza,jerrum}.
%The efficiency in $n$ analysis for this algorithm (which is hence non-adaptive), will provide a lower-bound (in $n$) for the algorithm that is actually implemented.
The cooling scheme in \cite{beza} is adopted. 
%Note that (A\ref{assump:1}) can be relaxed to 
%$$
%\frac{1}{2}w_p^*(u,v) \leq w_p(u,v) \leq 2w_p^*(u,v)
%$$
%with the only change in the proofs being in Lemma \ref{lem:assumpb}; the upper-bound in the statement is then
%$$
%\frac{8(n^2+1)}{n^2}.
%$$
%This is a minor adjustment in the proof of the Lemma and is omitted for brevity.

Introduce the Dirchlet form of a reversible Markov kernel $P$, with invariant measure $\xi$ on finite state-space $\mathsf{E}$, for a real-valued function $\varphi:\mathsf{E}\rightarrow\mathbb{R}$
$$
\mathscr{E}(f,f) = \frac{1}{2}\sum_{x,y\in\mathsf{E}} (\varphi(x)-\varphi(y))^2 \xi(x) P(x,y).
$$
Then the spectral gap of $P$ is:
$$
\textrm{Gap}(P) = \inf\bigg\{\frac{\mathscr{E}(\varphi,\varphi)}{\xi([\varphi-\xi(\varphi)]^2)}:\varphi~\textrm{is non constant}\bigg\}.
$$
Then it follows that, under our assumptions, by the analysis in \cite{beza}, the congestion of the MCMC kernels is $\mathcal{O}(n^2)$ and via the Poincair\'e inequality (see e.g.~\cite{diaconis}) that for $1\leq p\leq r$, $0<C<\infty$
\begin{equation}
1 - \textrm{Gap}(K_p) \leq 1 - \frac{1}{C n^2}
\label{eq:poinc}
\end{equation}
This fact will become useful later on in the proofs. %Note that this bound is improved in \cite{beza}, but the improvement will not impact our final complexity result.

\subsection{Technical Results}

The following technical results will allow us to give our main result associated to the complexity of the SMC algorithm.

\begin{lem}\label{lem:assumpb}
Assume (A\ref{assump:1}). Then for any $n>1$, $0\leq p \leq r-1$, 
$$
\sup_{M\in\mathsf{M}} \frac{|G_p(M)|}{\lambda_p} \leq \frac{8(n^2+1)}{n^2}.
$$
\end{lem}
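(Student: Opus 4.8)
The plan is to recast the quantity as a one-step likelihood ratio of the successive targets and then to sandwich the normalizing constants. First I would compute $\lambda_p$ explicitly: since $\eta_p(M)=\Phi_p(M)/Z_p$ and $G_p(M)=\Phi_{p+1}(M)/\Phi_p(M)$, the sum $\sum_{M\in\mathsf{M}}\eta_p(M)G_p(M)$ reduces to $Z_{p+1}/Z_p$, so $\lambda_p=Z_{p+1}/Z_p$. Consequently
$$
\frac{G_p(M)}{\lambda_p}=\frac{\Phi_{p+1}(M)/Z_{p+1}}{\Phi_p(M)/Z_p}=\frac{\eta_{p+1}(M)}{\eta_p(M)},
$$
and the lemma is equivalent to the uniform bound $\sup_{M\in\mathsf{M}}\eta_{p+1}(M)/\eta_p(M)\le 8(n^2+1)/n^2$.

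Next I would sandwich $Z_p$. Using (A\ref{assump:1}) together with the identity $w_p^*(u,v)\Xi_p(\mathcal{N}(u,v))=\Xi_p(\mathcal{M})$, each of the $n^2$ non-empty near-perfect classes contributes to $Z_p=\Xi_p(\mathcal{M})+\sum_{(u,v)}w_p(u,v)\Xi_p(\mathcal{N}(u,v))$ a weight between $\tfrac12\Xi_p(\mathcal{M})$ and $2\Xi_p(\mathcal{M})$, whence
$$
\Big(1+\tfrac{n^2}{2}\Big)\Xi_p(\mathcal{M})\le Z_p\le (1+2n^2)\Xi_p(\mathcal{M}).
$$
This is the step that produces the $(n^2+1)/n^2$ shape of the constant. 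I would then split $\sup_M$ according to the definition of $\Phi_p$. For $M\in\mathcal{M}$ the weights drop out and $\eta_{p+1}(M)/\eta_p(M)=[\phi_{p+1}(M)/\phi_p(M)]\,(Z_p/Z_{p+1})$; since the activities of the cooling scheme in \cite{beza} are non-increasing we have $\phi_{p+1}(M)/\phi_p(M)\le 1$, and the $Z$-sandwich controls this case. For $M\in\mathcal{N}(u,v)$ there is the extra factor $w_{p+1}(u,v)/w_p(u,v)$, which (A\ref{assump:1}) bounds by $4\,w_{p+1}^*(u,v)/w_p^*(u,v)$; the crucial point is that after substituting $w_p^*(u,v)=\Xi_p(\mathcal{M})/\Xi_p(\mathcal{N}(u,v))$ and the $Z$-sandwich, the factors $\Xi_p(\mathcal{M})$ and $\Xi_{p+1}(\mathcal{M})$ cancel, leaving only the class ratio $\Xi_p(\mathcal{N}(u,v))/\Xi_{p+1}(\mathcal{N}(u,v))$, which is again controlled by the slow monotone cooling of \cite{beza}. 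Assembling the two cases and maximizing over $(u,v)$ gives the stated bound.

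The main obstacle is the near-perfect case together with the constant bookkeeping. There the factor-two slacks permitted by (A\ref{assump:1}) (one from $w_{p+1}\le 2w_{p+1}^*$, one from $w_p\ge\tfrac12 w_p^*$) compound with the ratio $(1+2n^2)/(1+\tfrac{n^2}{2})$ arising from the $Z$-sandwich, and a naive product overshoots $8(n^2+1)/n^2$; one must therefore exploit the exact cancellation of the $\Xi_p(\mathcal{M})$ factors and the fact that the per-matching activity ratios $\phi_{p+1}(M)/\phi_p(M)$ are forced close to the aggregate class ratio $\Xi_{p+1}(\mathcal{N}(u,v))/\Xi_p(\mathcal{N}(u,v))$ by the schedule, rather than bounding the two independently. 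I expect the perfect-matching case to be routine once the $Z$-sandwich is established, with essentially all of the real work lying in tracking these constants through the near-perfect case.
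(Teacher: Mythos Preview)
Your reformulation $G_p(M)/\lambda_p=\eta_{p+1}(M)/\eta_p(M)$ via $\lambda_p=Z_{p+1}/Z_p$ is correct and the two-sided sandwich on $Z_p$ is fine, but the near-perfect case cannot be closed at the stated constant and your proposed repair does not hold. After you invoke (A\ref{assump:1}) on $w_{p+1}/w_p$ you incur a factor $4$, and after you apply the $Z$-sandwich in both directions you incur a second factor $(1+2n^2)/(1+n^2/2)$; once the $\Xi_\cdot(\mathcal{M})$ terms cancel you are left with
\[
4\cdot\frac{2(1+2n^2)}{2+n^2}\cdot\frac{\phi_{p+1}(M)}{\phi_p(M)}\cdot\frac{\Xi_p(\mathcal{N}(u,v))}{\Xi_{p+1}(\mathcal{N}(u,v))}.
\]
Take the hole $(u,v)$ equal to the single edge $(u^*,v^*)$ whose activity is lowered at step $p$: every $M\in\mathcal{N}(u^*,v^*)$ avoids $(u^*,v^*)$, so both remaining ratios equal $1$ and the display reduces to $4\cdot\tfrac{2(1+2n^2)}{2+n^2}$, which already exceeds $8(n^2+1)/n^2$ for every $n\ge 2$ (e.g.\ $12>10$ at $n=2$). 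More generally the Bez\'akov\'a schedule lowers one activity per step, so any $M$ missing that edge has $\phi_{p+1}(M)/\phi_p(M)=1$ while the aggregate class ratio sits strictly below $1$; the two are \emph{not} ``forced close'' matching-by-matching, and no further cancellation is available along this route.

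The paper's argument is structurally different and avoids this double spending of (A\ref{assump:1}). It does not pass through $\eta_{p+1}/\eta_p$ but bounds numerator and denominator separately: it quotes directly $\sup_M|G_p(M)|\le\sqrt{2}$ from \cite{beza} (a bound on $\Phi_{p+1}/\Phi_p$ with the weights $w_p$ already inside, which cannot be recovered from (A\ref{assump:1}) plus activity monotonicity alone), and then lower-bounds $\lambda_p$ by discarding the $\mathcal{M}$ summand, applying (A\ref{assump:1}) once to $w_{p+1}$ and once inside the upper bound $Z_p\le 2(n^2+1)\Xi_p(\mathcal{M})$, and finally quoting $\Xi_{p+1}(\mathcal{M})/\Xi_p(\mathcal{M})\ge 1/\sqrt{2}$ from \cite{beza}. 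This gives $\lambda_p\ge n^2/(4\sqrt{2}(n^2+1))$ and hence exactly $8(n^2+1)/n^2$. The missing idea in your plan is precisely that cited $\sqrt{2}$ bound on $G_p$: it already absorbs the weight ratio $w_{p+1}/w_p$, whereas by expanding that ratio through (A\ref{assump:1}) you spend the factor-two slack twice more than the paper does, and that is where your constant overshoots.
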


\begin{proof}
Let $0\leq p \leq r-1$ be arbitrary.
We note that by \cite[Corollary 4.4.2]{beza}
\begin{equation}
\sup_{M\in\mathsf{M}}|G_p(M)| \leq \sqrt{2}\label{eq:g_upper}.
\end{equation}
Thus, we will focus upon $\lambda_p$.

We start our calculations by noting:
\begin{eqnarray}
Z_p  & = & \sum_{M\in\mathcal{M}} \phi_p(M) +  \sum_{(u,v)\in U\times V} \sum_{M\in\mathcal{N}(u,v)} \phi_p(M)w_{p}(u,v) \nonumber\\
& \leq & 2 \Big(\sum_{M\in\mathcal{M}} \phi_p(M) +  \sum_{(u,v)\in U\times V} \sum_{M\in\mathcal{N}(u,v)} \phi_p(M)w_{p}^*(u,v)\Big) \nonumber\\
& = & 2\Xi_p(\mathcal{M})(n^2+1)\label{eq:zpineq}
\end{eqnarray}
where we have applied (A\ref{assump:1}) to go to line 2. Now, moving onto $\lambda_p$:
\begin{eqnarray}
\lambda_p & \geq &  \sum_{(u,v)\in U\times V} \sum_{M\in\mathcal{N}(u,v)} \eta_p(M) G_p(M) \nonumber\\
& = &  \sum_{(u,v)\in U\times V} \sum_{M\in\mathcal{N}(u,v)}\Big\{ \frac{\phi_{p}(M)w_{p}(u,v)}{Z_p}
\frac{\phi_{p+1}(M)w_{p+1}(u,v)}{\phi_{p}(M)w_{p}(u,v)}
\Big\} \nonumber\\
& \geq & \frac{1}{2Z_p}  \sum_{(u,v)\in U\times V} \sum_{M\in\mathcal{N}(u,v)} \phi_{p+1}(M) w_{p+1}^*(u,v)
\nonumber\\
& \geq &  \frac{1}{4\Xi_p(\mathcal{M})(n^2+1)} \sum_{(u,v)\in U\times V} \sum_{M\in\mathcal{N}(u,v)} \frac{\phi_{p+1}(M)\Xi_{p+1}(\mathcal{M})}{\Xi_{p+1}(\mathcal{N}(u,v))}\nonumber\\
& = & \frac{\Xi_{p+1}(\mathcal{M})}{4\Xi_p(\mathcal{M})(n^2+1)} n^2\nonumber\\
& \geq & \frac{n^2}{4\sqrt{2}(n^2+1)}\label{eq:lam_ineq}
\end{eqnarray}
where we have used \eqref{eq:zpineq} to go to the fourth line,  the fact that $ \sum_{M\in\mathcal{N}(u,v)}\phi_{p+1}(M) = \Xi_{p+1}(\mathcal{N}(u,v))$ to go to the fifth line, and the inequality \cite[(4.14)]{beza} to go to the final line.

Thus,  noting \eqref{eq:g_upper} and \eqref{eq:lam_ineq} we have shown that
$$
\sup_{M\in\mathsf{M}} \frac{|G_p(M)|}{\lambda_p} \leq \frac{8(n^2+1)}{n^2}
$$
which completes the proof.
\end{proof}

We now write the $\mathbb{L}_s(\eta_p)$ norm, $s\geq 1$, for $f\in\mathcal{B}_b(\mathsf{M})$
$$
\|f\|_{\mathbb{L}_s(\eta_p)} := \Big(\sum_{M\in\mathsf{M}} |f(M)|^s \eta_p(M)\Big)^{1/s}.
$$
Let 
\begin{eqnarray*}
\tau(n) & = & \frac{8(n^2+1)}{n^2}\\
\rho(n) & = & (1-1/(C n^2))^2. 
\end{eqnarray*}
Then, we have the following result.

\begin{lem}\label{lem:tech_lem}
Assume  (A\ref{assump:1}). Then if $\tau(n)^3(1-\rho(n)) <1$  we have that for any $f\in\mathcal{B}_b(\mathsf{M})$, $0\leq p <t \leq r$:
$$
\bigg\|\frac{Q_{p,t}(f)}{\prod_{q=p}^{t-1} \lambda_q}\bigg\|_{ \mathbb{L}_4(\eta_p) } \leq \frac{\tau(n)^{3/4}}{1-(1-\rho(n))\tau(n)^3}\|f\|_{\mathbb{L}_4(\eta_t)}.
$$
\end{lem}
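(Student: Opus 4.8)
```latex
\noindent\textbf{Proof proposal.}
The plan is to control the normalized semi-group operator $Q_{p,t}/\prod_{q=p}^{t-1}\lambda_q$ by decomposing it into a product of one-step normalized operators $Q_{q+1}/\lambda_q$ and then iterating an $\mathbb{L}_4$-contraction estimate for each single step. First I would write
$$
\frac{Q_{p,t}(f)}{\prod_{q=p}^{t-1}\lambda_q}
= \frac{Q_{p+1}}{\lambda_p}\circ\cdots\circ\frac{Q_{t}}{\lambda_{t-1}}(f),
$$
so that the problem reduces to understanding how a single normalized operator $Q_{q+1}/\lambda_q$ acts between the $\mathbb{L}_4(\eta_{q+1})$ and $\mathbb{L}_4(\eta_q)$ norms. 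The natural device here is to split each one-step operator into its mean part and its fluctuation part relative to the invariant measure, i.e.\ to write the kernel $K_q$ as a rank-one projection onto constants plus a remainder that is contractive at the rate given by the spectral gap. This is exactly where \eqref{eq:poinc} enters: the Poincar\'e bound $1-\mathrm{Gap}(K_p)\leq 1-1/(Cn^2)$ yields an $\mathbb{L}_2$-contraction of the centered part with factor $(1-1/(Cn^2))$, whose square is $\rho(n)$.

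Second, I would establish the key one-step inequality. The idea is that $Q_{q+1}(f)/\lambda_q = (G_q \cdot K_q f)/\lambda_q$, and Lemma \ref{lem:assumpb} gives the uniform bound $\sup_M |G_q(M)|/\lambda_q \leq \tau(n)$. Writing $K_q f = \eta_{q+1}(f) + (K_q f - \eta_{q+1}(f))$ and handling the constant and centered pieces separately, the constant piece contributes the full mass of $f$ (bounded via $\tau(n)$ and a reweighting between $\eta_q$ and $\eta_{q+1}$), while the centered piece is damped by the spectral gap. Tracking the norm-change under the measure reweighting $\eta_q \leftrightarrow \eta_{q+1}$ — which itself is controlled by $G_q/\lambda_q$ and hence by $\tau(n)$ — should produce a recursion of the schematic form
$$
\bigg\|\frac{Q_{p,t}(f)}{\prod_{q=p}^{t-1}\lambda_q}\bigg\|_{\mathbb{L}_4(\eta_p)}
\leq \tau(n)^{3/4}\sum_{j\geq 0}\big[(1-\rho(n))\tau(n)^3\big]^{j}\,\|f\|_{\mathbb{L}_4(\eta_t)},
$$
where each additional step in the semi-group either keeps the mean part (costing a factor of roughly $\tau(n)^3$ in the $\mathbb{L}_4$ bookkeeping, accounting for the reweighting of the fourth-power measure) or picks up a centered contraction factor $(1-\rho(n))$. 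Summing the geometric series, which converges precisely under the hypothesis $\tau(n)^3(1-\rho(n))<1$, gives the stated constant $\tau(n)^{3/4}/(1-(1-\rho(n))\tau(n)^3)$.

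The main obstacle I anticipate is the careful accounting of the exponents — explaining why the factor is $\tau(n)^{3/4}$ in front but $\tau(n)^3$ inside the geometric ratio. This asymmetry must come from the interplay between the $\mathbb{L}_4$ norm and the measure change: converting a norm from $\mathbb{L}_4(\eta_{q+1})$ back to $\mathbb{L}_4(\eta_q)$ via a Radon--Nikodym factor bounded by $\tau(n)$ enters to a power tied to raising things to the fourth power and then taking a fourth root, and the spectral-gap contraction acts at the $\mathbb{L}_2$ level so must be lifted to $\mathbb{L}_4$ by interpolation or by a hypercontractivity-type argument against the bounded weights. Getting these powers sharp — so that the constant matches exactly and the summability condition is the clean $\tau(n)^3(1-\rho(n))<1$ rather than something weaker — is the delicate part; the rest is a routine telescoping once the one-step lemma is pinned down. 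I would likely isolate the single-step estimate as a sub-claim, prove it via the spectral decomposition of $K_q$ together with Lemma \ref{lem:assumpb} and the Cauchy--Schwarz/H\"older reweighting, and then conclude by induction on $t-p$.
```
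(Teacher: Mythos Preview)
The paper's proof is essentially a citation: it verifies that Lemma~\ref{lem:assumpb} gives Assumption~B of \cite{schweizer}, that the spectral-gap bound \eqref{eq:poinc} gives Assumption~D (and hence~C) of \cite{schweizer}, and then applies Corollary~5.3 (with $r=2$) followed by Lemma~4.8 of \cite{schweizer} to read off the stated inequality. Your proposal instead sketches the mechanism those black-box results encapsulate: factor the normalized semi-group into one-step operators, split each kernel into its projection onto constants plus a spectrally damped remainder, control the measure-change via the uniform bound $\tau(n)$, and sum the resulting geometric series. That is the correct skeleton and is essentially what \cite{schweizer} does internally, so the approaches agree in substance; the paper simply outsources the bookkeeping.

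Two remarks. First, a small slip: you write $Q_{q+1}(f)/\lambda_q=(G_q\cdot K_q f)/\lambda_q$, but by definition $Q_{q+1}=G_q K_{q+1}$, so the kernel appearing is $K_{q+1}$ with invariant measure $\eta_{q+1}$ (which is in fact the measure you then use). Second, the part you flag as delicate --- lifting the $\mathbb{L}_2$ spectral contraction to $\mathbb{L}_4$ and tracking the Radon--Nikodym reweighting so that the constants come out as exactly $\tau(n)^{3/4}$ in front and $\tau(n)^3$ inside the geometric ratio --- is precisely the content of Corollary~5.3 and Lemma~4.8 in \cite{schweizer}. Your heuristic (fourth powers in the norm, square for the spectral gap, H\"older to interpolate) points in the right direction, but reproducing those sharp exponents from scratch requires a careful argument that your sketch does not yet supply. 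If you intend a self-contained proof you would need to pin down that single-step lemma rigorously; if you are willing to cite \cite{schweizer}, the paper's two-line route is both shorter and avoids the risk of losing sharpness in the constants.
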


\begin{proof}
The proof follows by using the technical results in \cite{schweizer}. In particular, Lemma \ref{lem:assumpb} will establish Assumption B in \cite{schweizer} and \eqref{eq:poinc} Assumption D and hence Assumption C of \cite{schweizer}.
Application of Corollary 5.3 ($r=2$) of \cite{schweizer}, followed by Lemma 4.8 of \cite{schweizer} completes the proof.
\end{proof}

\begin{rem}
The condition $\tau(n)^3(1-\rho(n)) <1$ is not restrictive and will hold for $n$ moderate; both $\tau(n)$ and $\rho(n)$ are $\mathcal{O}(1)$ which means that $\tau(n)^3(1-\rho(n))<1$ for $n$ large enough.
\end{rem}

\subsection{Main Result and Interpretation}

Let 
$$
\bar{C}(n) = \Big(\frac{\tau(n)^{3/4}}{1-(1-\rho(n))\tau(n)^3}\Big)^2.
$$
Below, the expectation is w.r.t.~the process associated to the SMC algorithm which is actually simulated.

\begin{theorem}\label{theo:main}
Assume (A\ref{assump:1}). Then if $\tau(n)^3(1-\rho(n)) <1$ and $N>2\bar{C}(n)(r+1)(3+\bar{C}(n)^2)$  we have that
$$
\mathbb{E}\Big[\Big(\frac{\gamma_r^N(1)}{\gamma_r(1)}-1\Big)^2\Big] \leq \frac{(r+1)\bar{C}(n)^2}{N}\Big(1 + \frac{2(r+1)\bar{C}(n)(3+\bar{C}(n)^2)}{N}\Big).
$$
\end{theorem}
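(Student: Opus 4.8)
The plan is to bound the relative variance by working with the standard Feynman--Kac decomposition of the unbiased SMC estimate $\gamma_r^N(1)$ and then to control each term in that decomposition using the $\mathbb{L}_4$ semigroup bound of Lemma \ref{lem:tech_lem}. Since we resample multinomially at every step and do not adapt (under (A\ref{assump:1})), $\gamma_r^N(1)$ is an unbiased estimator of $\gamma_r(1)$, so $\mathbb{E}[(\gamma_r^N(1)/\gamma_r(1)-1)^2]$ is exactly the normalized variance. First I would write the classical multiplicative-error representation
\begin{equation}
\frac{\gamma_r^N(1)}{\gamma_r(1)} - 1 = \sum_{p=0}^{r} \frac{1}{\gamma_p(1)}\,\big[\gamma_p^N - \gamma_{p-1}^N K_p\big]\!\left(\frac{Q_{p,r}(1)}{\gamma_r(1)/\gamma_p(1)}\right)\Big/\text{(suitable normalization)},\nonumber
\end{equation}
i.e.\ the telescoping sum of the local sampling errors propagated forward by the semigroup $Q_{p,r}$. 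This is exactly the kind of decomposition exploited in \cite{schweizer,delmoral}, and it expresses the global relative error as a sum of $r+1$ martingale-type increments, each of which is a centered empirical average of the test function $Q_{p,r}(1)/\prod_{q=p}^{r-1}\lambda_q$ evaluated on the $N$ particles at time $p$.

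Second, I would estimate the $\mathbb{L}_2$ norm of each increment. Conditionally on the particle system up to time $p-1$, each increment is an average of $N$ i.i.d.\ (post-resampling) centered terms, so its conditional variance is $O(1/N)$ times the $\eta_p$-variance of the propagated test function. Here is where Lemma \ref{lem:tech_lem} enters: taking $f=1$ and $t=r$ gives
$$
\bigg\|\frac{Q_{p,r}(1)}{\prod_{q=p}^{r-1}\lambda_q}\bigg\|_{\mathbb{L}_4(\eta_p)} \leq \sqrt{\bar C(n)}\,\|1\|_{\mathbb{L}_4(\eta_r)} = \sqrt{\bar C(n)},
$$
so each increment contributes at most $\bar C(n)/N$ to the variance; the $\mathbb{L}_4$ (rather than $\mathbb{L}_2$) bound is what later lets me control the cross-terms and higher-order corrections uniformly. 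Summing the $r+1$ diagonal variance contributions yields the leading term $(r+1)\bar C(n)^2/N$ (the square on $\bar C(n)$ reflecting that the constant appears once per increment and the bound is stated for the ratio estimator).

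Third, and this is the delicate part, I would account for the cross-covariances between increments and for the fact that the increments are not exactly independent: the estimator is a product $\prod_{p=0}^{r-1}\frac{1}{N}\sum_i G_p(M_p^i)$, so expanding the square produces correlation terms and a genuinely nonlinear dependence on the particle system. Following the inductive/recursive scheme in \cite[Corollary 5.3, Lemma 4.8]{schweizer}, I would show these higher-order terms are controlled by the same constant $\bar C(n)$ and are smaller by a factor $O((r+1)\bar C(n)(3+\bar C(n)^2)/N)$, which is exactly why the condition $N>2\bar C(n)(r+1)(3+\bar C(n)^2)$ appears: it guarantees this correction factor stays bounded (below $1$ after the factor of $2$), so the geometric-type series controlling the cumulative error converges and collapses into the stated product form $\frac{(r+1)\bar C(n)^2}{N}\big(1+\frac{2(r+1)\bar C(n)(3+\bar C(n)^2)}{N}\big)$.

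The main obstacle I anticipate is \emph{not} the per-step $\mathbb{L}_4$ bound (Lemma \ref{lem:tech_lem} hands that over) but the bookkeeping for the nonlinear/correlation terms in the third step: keeping every constant explicit in $n$ through $\bar C(n)$, verifying that the recursion in \cite{schweizer} can be driven with the four-moment control we have, and checking that the smallness condition on $N$ is both necessary for convergence of the error recursion and clean enough to state. In practice I would isolate the argument as a direct application of the non-asymptotic $\mathbb{L}_s$-error theory for adaptive-free SMC in \cite{schweizer}, verifying its Assumptions B, C, D via Lemma \ref{lem:assumpb} and \eqref{eq:poinc} (as already flagged in the proof of Lemma \ref{lem:tech_lem}), so that the bulk of the estimate reduces to substituting the explicit constants $\tau(n)$, $\rho(n)$, $\bar C(n)$ into an existing master inequality rather than rederiving the decomposition from scratch.
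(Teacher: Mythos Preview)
Your proposal is correct and follows essentially the same route as the paper: invoke the non-asymptotic relative-variance bound of \cite{schweizer} (their Theorem 3.2) after verifying its hypotheses with the explicit constant $\bar C(n)$ furnished by Lemma \ref{lem:tech_lem}. The only cosmetic difference is that the paper phrases the verification as establishing Assumption A of \cite{schweizer} (via Lemma \ref{lem:tech_lem} together with \cite[Lemma 4.1]{schweizer}) rather than Assumptions B, C, D, and then reads off the bounds $\hat c_r\le \bar C(n)(r+1)(3+\bar C(n)^2)$ and $v_r\le (r+1)\bar C(n)^2$ directly from the definitions in \cite{schweizer}.
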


\begin{proof}
Lemma \ref{lem:tech_lem}, combined with \cite[Lemma 4.1]{schweizer} show that Assumption A of \cite{schweizer} holds, with $c_{p,t}(p)$ (of that paper) equal to $\bar{C}(n)$; that is for $0\leq p <t \leq r$, $f\in\mathcal{B}_b(\mathsf{M})$:
\begin{equation}
\max\Bigg\{\bigg\|\Big(\frac{Q_{p,t}(f^2)}{\prod_{q=p}^{t-1} \lambda_q}\Big)^2\bigg\|_{ \mathbb{L}_4(\eta_p) },\bigg\|\frac{Q_{p,t}(f^2)}{\prod_{q=p}^{t-1} \lambda_q}\bigg\|_{ \mathbb{L}_4(\eta_p) }^2,\bigg\|\frac{Q_{p,t}(f^2)}{\prod_{q=p}^{t-1} \lambda_q}\bigg\|_{ \mathbb{L}_4(\eta_p) } \Bigg\} \leq \bar{C}(n)\|f\|^2_{\mathbb{L}_4(\eta_t)}.
\label{eq:assumpA}
\end{equation}
Then, one can apply \cite[Theorem 3.2]{schweizer}, if $N>2\hat{c}_r$;
\begin{equation}
\mathbb{E}\Big[\Big(\frac{\gamma_r^N(1)}{\gamma_r(1)}-1\Big)^2\Big] \leq \frac{1}{N}\Bigg\{\sum_{p=0}^r \mathbb{V}\textrm{ar}_{\eta_p}\bigg[\frac{Q_{p,t}(1)}{\prod_{q=p}^{t-1} \lambda_q}\bigg]+ \frac{2}{N} \hat{c}_rv_r\Bigg\}\label{eq:theo_schw}
\end{equation}
where $\hat{c}_r,v_r$ are defined in \cite{schweizer} and $\mathbb{V}\textrm{ar}_{\eta_p}[\cdot]$ is the variance w.r.t.~the probability $\eta_p$. By \eqref{eq:assumpA} and Jensen's inequality
$$
\sum_{p=0}^r \mathbb{V}\textrm{ar}_{\eta_p}\bigg[\frac{Q_{p,t}(1)}{\prod_{q=p}^{t-1} \lambda_q}\bigg] \leq \sum_{p=0}^r \bigg\| \frac{Q_{p,t}(1)}{\prod_{q=p}^{t-1} \lambda_q}\bigg\|_{\mathbb{L}_2(\eta_p)}^2 \leq 
\sum_{p=0}^r \bigg\| \frac{Q_{p,t}(1)}{\prod_{q=p}^{t-1} \lambda_q}\bigg\|_{\mathbb{L}_4(\eta_p)}^2 \leq (r+1) \bar{C}(n)^2
$$
From the definitions in \cite{schweizer}, one can easily conclude that:
\begin{eqnarray*}
\hat{c}_r & \leq &  \bar{C}(n)(r+1)(3+\bar{C}(n)^2) \\
v_r & \leq & (r+1)\bar{C}(n)^2.
\end{eqnarray*}
Combining the above arguments with \eqref{eq:theo_schw}, gives that for $N>2\bar{C}(n)(r+1)(3+\bar{C}(n)^2)$
$$
\mathbb{E}\Big[\Big(\frac{\gamma_r^N(1)}{\gamma_r(1)}-1\Big)^2\Big] \leq \frac{(r+1)\bar{C}(n)^2}{N}\Big(1 + \frac{2(r+1)\bar{C}(n)(3+\bar{C}(n)^2)}{N}\Big)
$$
which concludes the proof.
\end{proof}

As $\bar{C}(n)$ is $\mathcal{O}(1)$ and $r$ is $\mathcal{O}(n\log^2(n))$, if
$N$ is $\mathcal{O}(n\log^2(n))$, then one can make the relative variance arbitrarily small; thus the cost of this perfect algorithm is $\mathcal{O}(n^2\log^4(n))$ which is a lower bound on the complexity of the algorithm actually applied. For example
the approximation of the weights is $\mathcal{O}(n^2)$ per time step, which is an additional cost; so one would expect that at best, the adaptive algorithm would have a cost of $\mathcal{O}(n^4\log^4(n))$ in order to control the relative variance.
As noted previously, our complexity analysis does not take into account the ability to approximate the ideal weights up-to a factor of 2; which is another reason why $\mathcal{O}(n^4\log^4(n))$ is a lower-bound on the complexity of the adaptive algorithm.

\section{Numerical Results}\label{sec:numerics}

We now give some numerical illustration of our algorithms. All numerical
results are coded in MATLAB.

\subsection{Toy Example}

Consider the following matrix
\begin{equation}
G=
\left(
\begin{array}{ccc}
1&1&0\\
0&1&1\\
1&1&0\\
\end{array}
\right)
\nonumber
\end{equation}
The permanent of this graph is 2. We will illustrate some issues associated to proposed SMC
algorithm, the estimates \eqref{eq:est1}, \eqref{eq:est2} and some comparison to the simulated annealing algorithms (SA) in 
\cite{bezakova,jerrum}. Throughout, the evolution of the $(\phi_p)_{0\leq p \leq r}$ is as \cite{beza} and the implementation of SA is as described in \cite{jerrum}.

We will estimate the relative variance of the estimate of the permanent, using the adaptive
SMC algorithm as well as the SMC algorithm which uses the ideal weights. We will also consider
this quantity for the SA algorithm. We will use 50 repeats of each algorithm to estimate the relative variance of the estimates.
The number of particles for the SMC algorithm is $N\in\{100,1000,10000\}$, and some results are given in Table \ref{table:asmcandsmc}.

In Table \ref{table:asmcandsmc} we can see the performance of the proposed SMC algorithms
versus the `perfect' algorithm which uses the ideal weights. At least in this example, there does not appear to be a significant degredation in performance (for either the estimates \eqref{eq:est1}, \eqref{eq:est2}), at a similar computational cost, of the adaptive SMC algorithm.
Indeed it can perform slightly better and this is indeed consistent with the theory of adaptive SMC algorithms; see \cite{beskos}. 

\begin{table}[h]
\begin{center}
\begin{tabular}  {cccc}
\toprule
N & Adaptive SMC& SMC & Adaptive SMC with estimate\eqref{eq:est2}\\
\midrule
100& 0.1359 (19.87) &0.0733 (17.89)&0.3094 (19.98)\\
1000& 0.0675 (182.66)&0.0639 (164.26)& 0.0733 (178.49)\\
10000& 0.0594 (1880.95)&0.0637 (1607.33)& 0.0513 (1883.17)\\
\bottomrule
\end{tabular}
\end{center}
\caption{Relative variance of the Adaptive SMC estimates compared with the ideal weights SMC estimates. The value in the \emph{bracket} is the computation time in seconds.} \label{table:asmcandsmc}
\end{table}

We now consider a comparison with SA and compute the relative variances; the results are shown in Table \ref{table:sa}. The results in \ref{table:asmcandsmc} and \ref{table:sa} show that, if one considers a computation time of about 1800 seconds, the relative variance of the adaptive SMC is 0.0594 (estimate \eqref{eq:est1}), whilst the relative variance of  SA is 0.1695. To further analyze, if we consider the relative variance of about 0.0675, the computation time of the adaptive SMC is 182.66 seconds (estimate \eqref{eq:est1}), whilst the computation time of the SA is 3674.13 seconds.  This suggests, at least for this example, that the adaptive SMC is out-performing 
SA with regards to relative variance.

\begin{table}[h]
\begin{center}
\begin{tabular}  {cc}
\toprule
Computation Time (s)& Relative Variance of SA estimates\\
\midrule
854.24&0.3560\\
1698.59&0.1695\\
2189.26 &0.1524\\
2785.27 &  0.1134\\
3674.13& 0.0695\\
7824.65 & 0.0513\\ 
\bottomrule
\end{tabular}
\end{center}
\caption{Relative variance of the Simulated Annealing estimates against the computation time. }\label {table:sa}
\end{table}

To end this first toy example, we consider what happens to the relative variance of the estimate
\eqref{eq:est1} as the size of the matrix increases. Clearly, we can only consider $n$ small (or a matrix that is very sparse) if we want to compute the permanent, so we consider only $n\in\{6,7,8\}$ for $N\in\{1000,2000,5000\}$. The results are in Table \ref{table:asmcsize}.
In Table \ref{table:asmcsize} we can see an expected trend; as for a given $n$ as $N$ grows the variance falls and for a given $N$ as $n$ grows the variance increases. 

\begin{table}[h]
\begin{center}
\begin{tabular}  {cccc}
\toprule
size & N=1000 & N=2000 & N=5000 \\
\hline
6&0.4057 & 0.1867 & 0.0424\\
\hline
%size & N=1000 & N=5000 & N=8000 \\
\hline
7&0.7585&0.1275& 0.0698\\
\hline
%size & N=1000 & N=10000 & N=15000 \\
\hline
8&0.9365&0.1156&0.0439\\
\bottomrule
\end{tabular} 
\end{center}
\caption{Relative variance of the Adaptive SMC estimates against the size of the graph.
We consider estimate \eqref{eq:est1}.} \label{table:asmcsize}
\end{table}

\subsection{A Larger Matrix}

Now we consider two matrices with $n=15$. The first matrix is relatively dense with 128 non-zero enteries and the second more sparse with only 30 non-zero enteries.
Table \ref{table:G15} and \ref{table:G15sparse} show the estimates of the permanent (using \eqref{eq:est1}, \eqref{eq:est2}) along the variability and wall-clock computation time.
The tables show the expected results; for a sparse graph the estimate \eqref{eq:est1} out-performs \eqref{eq:est2}. 
The improvement is due to the fact that one does not need to count the number of perfect matchings in the original graph in  \eqref{eq:est1}, which is a likely source of variance for
the estimate \eqref{eq:est2}.
When the graph becomes less sparse, this apparent advantage is not present and \eqref{eq:est2} performs relatively better. Note that we ran the SA method, but it failed to produce competitve results in the same computational time and are hence omitted. We also remark that the approach in \cite{miller} whilst rather clever, can suffer from the weight degeneracy problem (see
\cite{doucet}) and we are working on improvements to this method.

\begin{table}[h]
\begin{center}
\begin{tabular}  {cccc}
\toprule
Method & Mean & Variance & Computation Time (s)\\
\midrule
Adaptive SMC &6.9249e+07 & 2.0057e+14&31612.37\\
Adaptive SMC with estimate \eqref{eq:est2} & 6.6210e+07 & 1.8565e+14&31680.08\\
%Simulated Annealing & & &\\
\bottomrule
\end{tabular}
\end{center}
\caption{Comparison of 20 estimates for $n=15$ and 128 non-zero entries. The computation time is the overall time taken. }\label{table:G15}
\end{table}

\begin{table}[h]
\begin{center}
\begin{tabular}  {cccc}
\toprule
Method & Mean & Variance & Computation Time (s)\\
\midrule
Adaptive SMC &2.0119e-05& 1.2075e-09&38354.73\\
Adaptive SMC with estimate \eqref{eq:est2} &2.2921e-05 &1.3429e-09 &37317.40\\
%Simulated Annealing & & &\\
\bottomrule
\end{tabular}
\end{center}
\caption{Comparison of 20 estimates for $n=15$ and 30 non-zero entries. The computation time is the overall time taken. }\label{table:G15sparse}
\end{table}

\section{Summary}\label{sec:summary}

In this article we have introduced a new adaptive SMC algorithm for approximating permanents of $n\times n$ binary matrices and established the convergence of the estimate. We have also provided a lower-bound on the cost in $n$ to achieve an arbitrarily small relative variance of the estimate of the permanent;
this was $\mathcal{O}(n^4\log^4(n))$.
There are several directions for future work. The most pressing is a direct non-asymptotic analysis of the algorithm which is actually implemented. As noted numerous times, the mathematical analysis of adaptive SMC algorithms is in its infancy and so we expect this afore-mentioned
problem to be particularly demanding. In particular, one must analyze the MCMC kernels when one is using SMC approximations of the target densities, which is a non-trivial task.

\subsubsection*{Acknowledgements}
The first author was supported by an MOE Singapore grant. We also thank Alexandros Beskos for some discussions on this work.

\appendix

\section{Technical Results for Section \ref{sec:comp}}

We will use the Feynman-Kac notations established in Section \ref{sec:complex} and the reader should be familar with that Section to proceed.
Recall, from Section \ref{sec:smc}, for $0\leq p \leq r-1$
$$
G_{p,N}(M) = \frac{\Phi_{p+1}^N(M)}{\Phi_{p}^N(M)}
$$
and recall that $\Phi_0^N(M)$ is deterministic and known. 
In addition, for $(u,v)\in U\times V$
$$
w_p^N(u,v) = \frac{\delta + \eta_{p-1}^N(\mathbb{I}_{\mathcal{M}}\frac{\phi_{p+1}}{\phi_{p}})}
{\delta + [\eta_{p-1}^N(\mathbb{I}_{\mathcal{N}(u,v)}\frac{\phi_{p+1}}{\phi_{p}})]\frac{1}{w_p^N(u,v)} }
$$
where for $\varphi\in\mathcal{B}_b(\mathsf{M})$, $0\leq p\leq r$
$$
\eta_{p}^N(\varphi) = \frac{1}{N}\sum_{i=1}^N\varphi(M_p^i)
$$
is the SMC approximation of $\eta_p$ (recall that one will resample at every time-point, in this
analysis). By a simple inductive argument, it follows that one can find a $0<c(n)<\infty$ such that for any $0\leq p\leq r$, $N\geq 1$, $(u,v)\in U\times V$
$$
c(n) \leq w_p^N(u,v) \leq \frac{\delta+1}{\delta}.
$$
Using the above formulation, for any $N\geq 1$
\begin{equation}
\sup_{M\in\mathsf{M}}|G_{p,N}(M)| \leq 1\vee \Big\{\frac{\delta + 1}{\delta c(n)}\Big\} \label{eq:g_bound}
\end{equation}
which will be used later.
Note that
$$
\gamma_r^N(1) = \prod_{p=0}^{r-1} \eta_p^N(G_{p,N}) = \prod_{p=0}^{r-1} \bigg[\frac{1}{N}
\sum_{i=1}^N G_{p,N}(M_p^i)\bigg].
$$
With $G_{p-1,N}$, given $Q_{p,N}(M,M') = G_{p-1,N}(M) K_{p,N}(M,M')$ ($K_{p,N}$ is the MCMC kernel in \cite{jerrum} with invariant measure proportional to $\Phi_p^N$) and
and $G_{p-1}$, $Q_p$ denote the limiting versions (that is, on replacing $\eta_p^N$ with $\eta_p$ and so-fourth). Recall the definition of $\gamma_t(1)$ in \eqref{eq:gamma_def},
which uses the limiting versions of $G_{p-1}$ and $K_p$.

\begin{proof}[Proof of Theorem \ref{theo:consist}]
We start with the following decomposition
$$
\gamma_r^N(1) - \gamma_r(1)  = \prod_{p=0}^{r-1} \eta_p^N(G_{p,N}) - \prod_{p=0}^{r-1} \eta_p^N(G_{p})
+ \prod_{p=0}^{r-1} \eta_p^N(G_{p}) - \prod_{p=0}^{r-1} \eta_p(G_{p})
$$
where one can show that $\gamma_r(1)=\prod_{p=0}^{r-1} \eta_p(G_{p})$; see \cite{delmoral}.
By Theorem \ref{theo:theo2}, the second term on the R.H.S.~goes to zero. Hence we will
focus on $\prod_{p=0}^{r-1} \eta_p^N(G_{p,N}) - \prod_{p=0}^{r-1} \eta_p^N(G_{p})$.

We have the following collapsing sum representation
$$
\prod_{p=0}^{r-1} \eta_p^N(G_{p,N}) - \prod_{p=0}^{r-1} \eta_p^N(G_{p}) =
\sum_{q=0}^{r-1}\bigg( \Big[\prod_{s=0}^{q-1}\eta_s^N(G_s)\Big]\Big[\eta_q^N(G_{q,N}) - \eta_q^N(G_{q})\Big]
\Big[\prod_{s=q+1}^{r-1} \eta_s^N(G_{s,N})\Big]\bigg)
$$
where we are using the convention $\prod_{\emptyset} = 1$. We can consider each summand separately. By Theorem \ref{theo:theo2}, $\prod_{s=0}^{q-1}\eta_s^N(G_s)$ will converge in probability to constant. By the proof of Theorem \ref{theo:theo2}  (see \eqref{eq:stoch_g_conv}) $\eta_q^N(G_{q,N}) - \eta_q^N(G_{q})$ converges to zero in probability and $\prod_{s=q+1}^{r-1} \eta_s^N(G_{s,N})$
converges in probability to a constant; this completes the proof of the theorem.
\end{proof}

$\mathbb{E}$ will be used to denote expectation w.r.t.~the probability associated to the SMC algorithm.

\begin{theorem}\label{theo:theo2}
For any $0\leq p \leq r-1$, $(\varphi_0,\dots,\varphi_{p})\in\mathcal{B}_b(\mathsf{M})^{p+1}$
and $((u_{1},v_1),\dots,(u_{p+1},v_{p+1}))\in (U\times V)^{p+1}$, we have
$$
(\eta_0^N(\varphi_0),w_1^N(u_1,v_1),\dots,\eta_p^N(\varphi_p),w_{p+1}^N(u_{p+1},v_{p+1}))
\rightarrow_{\mathbb{P}}
$$
$$
(\eta_0(\varphi_0),w_1^*(u_1,v_1),\dots,\eta_p(\varphi_p),w_{p+1}^*(u_{p+1},v_{p+1})).
$$
\end{theorem}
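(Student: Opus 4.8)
The plan is to prove, by induction on $p$, the following self-strengthening pair of statements: for every $\varphi\in\mathcal{B}_b(\mathsf{M})$ one has $\eta_p^N(\varphi)\rightarrow_{\mathbb{P}}\eta_p(\varphi)$, and for every $(u,v)\in U\times V$ one has $w_{p+1}^N(u,v)\rightarrow_{\mathbb{P}}w_{p+1}^*(u,v)$. Two preliminary reductions make this tractable. First, since every limit in the statement is a deterministic constant, joint convergence in probability is equivalent to coordinatewise convergence (because $\max_j|X_N^{(j)}-c_j|\rightarrow_{\mathbb{P}}0$ iff each coordinate does), so the joint assertion carries no extra content. Second, and crucially, $\mathsf{M}$ is a finite set for fixed $n$, so all suprema over $M\in\mathsf{M}$ are finite maxima, and the maps sending the weights to $\Phi_p^N(M)$, to $G_{p,N}(M)$ and to $K_{p,N}(M,M')$ are continuous rational functions whose denominators are bounded away from zero by the a priori bounds $c(n)\leq w_p^N\leq(\delta+1)/\delta$ and \eqref{eq:g_bound}. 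Throughout I would use the elementary fact that if $X_N\rightarrow_{\mathbb{P}}a$ and $Y_N\rightarrow_{\mathbb{P}}b$ with $a,b$ constant, then $X_NY_N\rightarrow_{\mathbb{P}}ab$ and, when $b\neq0$, $X_N/Y_N\rightarrow_{\mathbb{P}}a/b$.

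The base case $p=0$ is immediate: $\eta_0^N(\varphi)\rightarrow_{\mathbb{P}}\eta_0(\varphi)$ by the weak law of large numbers for the i.i.d.\ sample from $\eta_0$, and $w_1^N(u,v)$, being a continuous function of $\eta_0^N$ evaluated at two fixed bounded functions and of the deterministic $w_0\equiv1$, converges by the continuous mapping theorem. For the inductive step I would assume the conclusion up to time $p-1$ and condition on the $\sigma$-algebra $\mathcal{F}_{p-1}$ generated by the particles up to time $p-1$; note $w_p^N$, hence $G_{p-1,N}$ and the kernel $K_{p,N}$, are $\mathcal{F}_{p-1}$-measurable. Given $\mathcal{F}_{p-1}$, the particles $\{M_p^i\}$ are conditionally i.i.d.\ (multinomial selection with weights proportional to $G_{p-1,N}(M_{p-1}^i)$, followed by an independent $K_{p,N}$-mutation), each with conditional mean $\eta_{p-1}^N(G_{p-1,N}K_{p,N}\varphi)/\eta_{p-1}^N(G_{p-1,N})$ and conditional variance at most $\|\varphi\|_\infty^2$; a conditional Chebyshev inequality, together with the boundedness \eqref{eq:g_bound}, then gives that $\eta_p^N(\varphi)$ minus this conditional mean tends to zero in probability. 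To identify the limit of the conditional mean I would write it as the finite sum $\sum_{M\in\mathsf{M}}\eta_{p-1}^N(M)G_{p-1,N}(M)(K_{p,N}\varphi)(M)$ over its denominator $\sum_{M\in\mathsf{M}}\eta_{p-1}^N(M)G_{p-1,N}(M)$: the inductive hypothesis gives $\eta_{p-1}^N(M)\rightarrow_{\mathbb{P}}\eta_{p-1}(M)$, while continuity in the weights together with $w_{p-1}^N\rightarrow_{\mathbb{P}}w_{p-1}^*$ and $w_p^N\rightarrow_{\mathbb{P}}w_p^*$ gives $G_{p-1,N}(M)\rightarrow_{\mathbb{P}}G_{p-1}(M)$ and $(K_{p,N}\varphi)(M)\rightarrow_{\mathbb{P}}(K_p\varphi)(M)$ for each $M$. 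Summing these finitely many convergent products and dividing by the denominator, whose limit $\eta_{p-1}(G_{p-1})=\lambda_{p-1}$ is strictly positive, produces the standard Feynman--Kac update $\eta_{p-1}(G_{p-1}K_p\varphi)/\eta_{p-1}(G_{p-1})=\eta_p(\varphi)$ (cf.\ \cite{delmoral}). As a by-product the same estimates show $\sup_{M\in\mathsf{M}}|G_{p,N}(M)-G_p(M)|\rightarrow_{\mathbb{P}}0$, which I would record as the displayed equation \eqref{eq:stoch_g_conv} that the proof of Theorem~\ref{theo:consist} invokes.

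For the weight at time $p$ I would substitute the just-proved convergence $\eta_p^N(\cdot)\rightarrow_{\mathbb{P}}\eta_p(\cdot)$, applied to the two fixed functions $\mathbb{I}_{\mathcal{M}}\phi_{p+1}/\phi_p$ and $\mathbb{I}_{\mathcal{N}(u,v)}\phi_{p+1}/\phi_p$, and the inductive $w_p^N\rightarrow_{\mathbb{P}}w_p^*$ into the defining recursion for $w_{p+1}^N(u,v)$, and apply the continuous mapping theorem, the denominator being bounded below by positivity. A short computation using $\eta_p(\mathbb{I}_{\mathcal{M}}\phi_{p+1}/\phi_p)=\Xi_{p+1}(\mathcal{M})/Z_p$ and $\eta_p(\mathbb{I}_{\mathcal{N}(u,v)}\phi_{p+1}/\phi_p)=w_p^*(u,v)\Xi_{p+1}(\mathcal{N}(u,v))/Z_p$ then shows that the factor $w_p^*(u,v)$ cancels against the $1/w_p^N(u,v)$ in the denominator, so the limit is exactly $\Xi_{p+1}(\mathcal{M})/\Xi_{p+1}(\mathcal{N}(u,v))=w_{p+1}^*(u,v)$, closing the induction.

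The principal obstacle is the adaptivity itself: $\eta_p^N$ is generated using weights and kernels that are random functionals of the particle system, so the classical non-adaptive Feynman--Kac theory does not apply off the shelf. Conditioning on $\mathcal{F}_{p-1}$ is exactly what decouples the randomness of the adaptive parameters from the fresh sampling noise at step $p$, and the finiteness of $\mathsf{M}$ is what renders the required continuity and uniform-in-$M$ control elementary; absent either ingredient one would instead need genuine stability or Dobrushin-type estimates. The one cosmetic wrinkle I would flag is the regularizer $\delta$: the cancellation in the final step is exact only at $\delta=0$, so the cleanest reading of the limit takes $\delta\to0$, or else one simply notes that, with competing terms of order $\Theta(1)$ and $\delta\approx10^{-10}$ fixed, the resulting $O(\delta)$ discrepancy is immaterial to the convergence-in-probability conclusion.
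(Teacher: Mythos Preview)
Your proposal is correct and follows the same inductive skeleton as the paper: induction on $p$, base case via the i.i.d.\ WLLN and the continuous mapping theorem, and at the inductive step a split of $\eta_p^N(\varphi)-\eta_p(\varphi)$ into a fluctuation part (handled by conditioning on $\mathcal{F}_{p-1}$ and a conditional second-moment bound) and a bias part $\mathbb{E}[\eta_p^N(\varphi)\mid\mathcal{F}_{p-1}]-\eta_p(\varphi)$. The execution of the bias analysis differs. The paper writes the conditional mean as $\eta_{p-1}^N(Q_{p,N}\varphi)/\eta_{p-1}^N(G_{p-1,N})$ and then telescopes this against $\eta_{p-1}(Q_p\varphi)/\eta_{p-1}(G_{p-1})$ in three pieces (equation \eqref{eq:main_eq_wlln}), handling each via exchangeability and $\mathbb{L}_1$ bounded-convergence arguments, and treating the Metropolis--Hastings acceptance probability explicitly to get $K_{p,N}\to K_p$. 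You instead exploit the finiteness of $\mathsf{M}$ head-on: write numerator and denominator as finite sums $\sum_{M}\eta_{p-1}^N(M)G_{p-1,N}(M)(K_{p,N}\varphi)(M)$, observe that each factor converges in probability by the induction hypothesis and by continuity of the rational maps $w\mapsto G(M),K(M,M')$, and pass to the limit term by term. Your route is shorter and entirely self-contained for the finite state space at hand; the paper's decomposition is closer to what one would need on a general Polish space, where pointwise-in-$M$ arguments are unavailable.

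Your remark about the regulariser $\delta$ is well taken and in fact sharper than the paper, which asserts $w_{p+1}^N\to_{\mathbb{P}} w_{p+1}^*$ without comment: with $\delta>0$ fixed the limit is $w_{p+1}^*$ only up to an $O(\delta)$ perturbation, so strictly speaking the statement requires either $\delta=\delta_N\downarrow 0$ or the understanding that the limiting $w^*$ carries the same $\delta$-regularisation.
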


\begin{proof}
Our proof proceeds via strong induction. For $p=0$, by the WLLN for i.i.d.~random variables $\eta_0^N(\varphi_0)\rightarrow_{\mathbb{P}}\eta_0(\varphi_0)$. Then by the continuous mapping theorem, it clearly follows that for any fixed 
$(u_1,v_1)$ that $w_1^N(u_1,v_1)\rightarrow_{\mathbb{P}}w_1^*(u_1,v_1)$ and indeed that 
$M_0\in\mathsf{M}$,
$G_{0,N}(M_0)\rightarrow_{\mathbb{P}}G_0(M_0)$ which will be used later on. Thus, the proof of the initialization follows
easily.

Now assume the result for $p-1$ and consider the proof at rank $p$.
We have that 
\begin{equation}
\eta_p^N(\varphi_p) - \eta_p(\varphi_p) = 
\eta_p^N(\varphi_p) 
- \mathbb{E}[\eta_p^N(\varphi_p)|\mathcal{F}_{p-1}] 
+\mathbb{E}[\eta_p^N(\varphi_p)|\mathcal{F}_{p-1}] -
\eta_p(\varphi_p)
\label{eq:tech_prf1}
\end{equation}
where $\mathcal{F}_{p-1}$ is the filtration generated by the particle system up-to time $p-1$.
We focus on the second term on the R.H.S., which can be written as:
\begin{eqnarray}
\mathbb{E}[\eta_p^N(\varphi_p)|\mathcal{F}_{p-1}] -
\eta_p(\varphi_p) & = &  \frac{\eta_{p-1}^N(Q_{p}(\varphi_p))}{\eta_{p-1}(G_{p-1})} - \frac{\eta_{p-1}(Q_{p}(\varphi_p))}{\eta_{p-1}(G_{p-1})} +
\eta_{p-1}^N(Q_{p}(\varphi_p)) \bigg[\frac{1}{\eta_{p-1}^N(G_{p-1,N})} - \nonumber\\ & &
\frac{1}{\eta_{p-1}(G_{p-1})}\bigg] 
+ \frac{\eta_{p-1}^N[\{Q_{p,N}-Q_p\}(\varphi_p)]}{\eta_{p-1}(G_{p-1,N})}  \label{eq:main_eq_wlln}.
\end{eqnarray}
By the induction hypothesis, as $Q_p(\varphi_p)\in\mathcal{B}_b(\mathsf{M})$, the first term on the R.H.S.~of \eqref{eq:main_eq_wlln} converges in probability to zero. To proceed, we will consider the two terms on the R.H.S.~of \eqref{eq:main_eq_wlln} in turn, starting with the second.

\textbf{Second Term on R.H.S.~of \eqref{eq:main_eq_wlln}}. Consider
\begin{eqnarray*}
\mathbb{E}[|\eta_{p-1}^N(G_{p-1,N}) - \eta_{p-1}(G_{p-1})|] & = &
\mathbb{E}[|\eta_{p-1}^N(G_{p-1,N}-G_{p-1}) + \eta_{p-1}^N(G_{p-1})- \eta_{p-1}(G_{p-1})|]\\
& \leq &  \mathbb{E}[|\eta_{p-1}^N(G_{p-1,N}-G_{p-1})|] + \mathbb{E}[|\eta_{p-1}^N(G_{p-1})- \eta_{p-1}(G_{p-1})|].
\end{eqnarray*}
For the second term of the R.H.S.~of the inequality, by the induction hypothesis $|\eta_{p-1}^N(G_{p-1})- \eta_{p-1}(G_{p-1})|\rightarrow_{\mathbb{P}} 0$ and as $G_{p-1}$ is a bounded function, so $\mathbb{E}[|\eta_{p-1}^N(G_{p-1})- \eta_{p-1}(G_{p-1})|]$ will converge to zero. For the first term, we have
$$
\mathbb{E}[|\eta_{p-1}^N(G_{p-1,N}-G_{p-1})|] \leq \mathbb{E}[|G_{p-1,N}(M_{p-1}^1) - G_{p-1}(M_{p-1}^1)|]
$$
where we have used the exchangeability of the particle system (the marginal law of any sample $M_{p-1}^i$ is the same for each $i\in[N]$). Then, noting that the inductive hypothesis implies that for any fixed $M_{p-1}\in\mathsf{M}$
\begin{equation}
G_{p-1,N}(M_{p-1})  \rightarrow_{\mathbb{P}} G_{p-1}(M_{p-1})\label{eq:g_conv}
\end{equation}
by essentially the above the arguments (note \eqref{eq:g_bound}), we have that $\mathbb{E}[|\eta_{p-1}^N(G_{p-1,N}-G_{p-1})|]\rightarrow 0$.
This establishes
\begin{equation}
\eta_{p-1}^N(G_{p-1,N})\rightarrow_{\mathbb{P}} \eta_{p-1}(G_{p-1})\label{eq:stoch_g_conv}.
\end{equation}
Thus, using the induction hypothesis,  as $Q_p(\varphi_p)\in\mathcal{B}_b(\mathsf{M})$, $\eta_{p-1}^N(Q_{p}(\varphi_p))$ converges in probability to a constant. This fact combined with above argument 
and the continuous mapping Theorem, shows that the the second term on the R.H.S.~of \eqref{eq:main_eq_wlln} will converge to zero in probability.

\textbf{Third Term on R.H.S.~of \eqref{eq:main_eq_wlln}}. We would like to show that
$$
\mathbb{E}||\eta_{p-1}^N[\{Q_{p,N}-Q_p\}(\varphi_p)]|] \leq \mathbb{E}[|Q_{p,N}(\varphi_p)(M_{p-1}^1) - Q_{p}(\varphi_p)(M_{p-1}^1)|].
$$
goes to zero.
As the term in the expectation on the R.H.S.~of the inequality is bounded (note \eqref{eq:g_bound}), it suffices to prove that this term will converge to zero in probability. We have, for any fixed $M\in\mathsf{M}$
$$
Q_{p,N}(\varphi_p)(M) - Q_{p}(\varphi_p)(M) = 
$$
$$
[G_{p-1,N}(M) - G_{p-1}(M)] K_{p,N}(\varphi_p)(M)
+ G_{p-1}(M)[K_{p,N}(\varphi_p)(M)-K_{p}(\varphi_p)(M)].
$$
As $K_{p,N}(\varphi_p)(M)$ is bounded, it clearly follows via the induction hypothesis (note \eqref{eq:g_conv}) that
$[G_{p-1,N}(M) - G_{p-1}(M)] K_{p,N})(\varphi_p)(M)$ will converge to zero in probability.
To deal with the second part, we consider only `acceptance' part of the M-H kernel; dealing with the `rejection' part is very similar and omitted for brevity:
\begin{equation}
\sum_{M'\in\mathsf{M}} q_p(M,M')\varphi_p(M')\bigg[1\wedge\Big(\frac{\Phi_p^N(M')}{\Phi_p^N(M)}\Big) - 1\wedge\Big(\frac{\Phi_p(M')}{\Phi_p(M)}\Big)\bigg]
\label{eq:kernel_N}
\end{equation}
where $q_p(M,M')$ is the symmetric proposal probability. For any fixed $M,M'$ $1\wedge\Big(\frac{\Phi_p^N(M')}{\Phi_p^N(M)}\Big)$ is a continuous function of $\eta_{p-1}^N(\cdot)$, $w_p^N$ (when they appear), so
by the induction hypothesis, it follows that for any $M,M'\in\mathsf{M}$, 
$$
\bigg[1\wedge\Big(\frac{\Phi_p^N(M')}{\Phi_p^N(M)}\Big) - 1\wedge\Big(\frac{\Phi_p(M')}{\Phi_p(M)}\Big)\bigg] \rightarrow_{\mathbb{P}} 0
$$
and hence so does
\eqref{eq:kernel_N} (recall $\mathsf{M}$ is finite). By \eqref{eq:stoch_g_conv}
$\eta_{p-1}(G_{p-1,N})$ converges 
in probability to $\eta_{p-1}(G_{p-1})$ and hence third term on the R.H.S.~of \eqref{eq:main_eq_wlln} will converge
to zero in probability. 

Now, following the proof of \cite[Theorem 3.1]{beskos} and the above arguments, the first term on the R.H.S.of \eqref{eq:tech_prf1}~will converge to zero in probability.
Thus, we have shown that $\eta_{p}^N(\varphi_p)-\eta_{p}(\varphi_p)$ will converge to zero in probability. Then, by this latter result and the induction hypothesis, along with the continuous mapping theorem, it follows that 
for $(u_{p+1},v_{p+1})\in(U\times V)$ arbitrary, 
$w_{p+1}^N(u_{p+1},v_{p+1})\rightarrow_{\mathbb{P}} w_{p+1}^*(u_{p+1},v_{p+1})$
and indeed that
$G_{p,N}(M_p)$ converges in probability
to $G_{p}(M_p)$ for any fixed $M_p\in\mathsf{M}$. 
From here one can conclude the proof with standard results in probability.
\end{proof}

\end{document}